\documentclass[letterpaper, 10 pt, conference]{ieeeconf}  
\usepackage[font=footnotesize, compatibility=false]{caption}
\usepackage{xurl}
\usepackage{breakurl}
\usepackage{hyperref}



\usepackage{amssymb}
\usepackage{amsmath}
\usepackage{amsthm}
\newtheorem{theorem}{Theorem}
\newtheorem{lemma}{Lemma}
\newtheorem{remark}{Remark}

\usepackage{tikz}
\usepackage{textcomp}
\usepackage{hyperref}
\usepackage{lipsum}

\newcommand\copyrighttext{%
  \footnotesize \textcopyright 2026 IEEE. Personal use of this material is permitted.
  Permission from IEEE must be obtained for all other uses, in any current or future 
  media, including reprinting/republishing this material for advertising or promotional 
  purposes, creating new collective works, for resale or redistribution to servers or 
  lists, or reuse of any copyrighted component of this work in other works.} 
\newcommand\copyrightnotice{%
\begin{tikzpicture}[remember picture,overlay]
\node[anchor=south,yshift=10pt] at (current page.south) {\fbox{\parbox{\dimexpr\textwidth-\fboxsep-\fboxrule\relax}{\copyrighttext}}};
\end{tikzpicture}%
}

\usepackage{algorithmic}
\usepackage[ruled,vlined,linesnumbered]{algorithm2e}

\usepackage{graphicx}
\usepackage{xcolor}

\IEEEoverridecommandlockouts                              

\title{\LARGE \bf
Active Calibration of Reachable Sets Using Approximate Pick-to-Learn
}

\author{Sampada Deglurkar$^{1}$, Ebonye Smith$^{1}$, Jingqi Li$^{2}$, and Claire J. Tomlin$^{1}$
\thanks{This work was supported by NSF Safe Learning Enabled Systems, 
the DARPA Assured Autonomy, ANSR, and TIAMAT programs, and
the NASA ULI on Safe Aviation Autonomy.  
E.S. was supported by NSF GRFP, and J.L. by an Oden
Institute Fellowship.}
\thanks{$^{1}$Sampada Deglurkar, Ebonye Smith, and Claire J. Tomlin are with the Department of Electrical Engineering and Computer Sciences, University of California Berkeley, Berkeley, CA 94720, USA, tel:510-643-6610,  {\tt\small sampada\_deglurkar@berkeley.edu, ebonyesmith@berkeley.edu, tomlin@berkeley.edu}}%
\thanks{$^{2}$Jingqi Li is  with the Oden Institute of Computational Engineering and Sciences, University of Texas at Austin, Austin, TX 78712, USA, tel: 765-337-0678
        {\tt\small jingqi.li@austin.utexas.edu}}%
}

\begin{document}

\maketitle
\thispagestyle{empty}
\pagestyle{empty}

\copyrightnotice

\begin{abstract}

Reachability computations that rely on approximate or learned models require calibration in order to uphold confidence about their guarantees.
Calibration generally involves sampling scenarios inside the reachable set. However, producing reasonable probabilistic guarantees may require many samples, which can be costly.
To remedy this, we propose that calibration of reachable sets be performed using active learning strategies.
In order to produce a probabilistic guarantee on the active learning, we adapt the Pick-to-Learn algorithm, which produces generalization bounds for standard supervised learning, to the active learning setting.
Our method, Approximate Pick-to-Learn, treats the process of choosing data samples as maximizing an approximate error function.
Conformal prediction is used to ensure that the approximate error is close to the true model error.
We demonstrate our technique for a simulated drone racing example in which learning is used to provide an initial guess of the reachable tube.
Our method requires fewer samples to calibrate the model and provides more accurate sets than the baselines while also producing a novel generalization bound.
\end{abstract}

\section{INTRODUCTION}

In many safety-critical applications, reachability analysis \cite{bansal2017hamiltonjacobi} is a valuable tool for providing safety guarantees for dynamical systems.
Traditionally, reachability methods are computationally intensive, are performed offline, and require pre-specified models of dynamics, objectives, and constraints.
Some of these limitations have been addressed by the use of learning-based algorithms to solve for reachable sets \cite{bansal2021deepreach,hsu2021safety,li2025certifiable}.
However, not only can learning-enabled techniques be error-prone, but also system modeling assumptions may not match reality.
For example, \cite{lin2023generating} provides examples of how incomplete training can cause a learned reachable set to misclassify unsafe points as safe near its boundary.
In the absence of hard safety guarantees, one approach from prior literature is to turn to \textit{probabilistic} guarantees, and to \textit{calibrate} a reachable set using samples~\cite{lin2023generating, lin2024verification}.  

\begin{figure}[!t]
\centering
\includegraphics[width=3.35in]{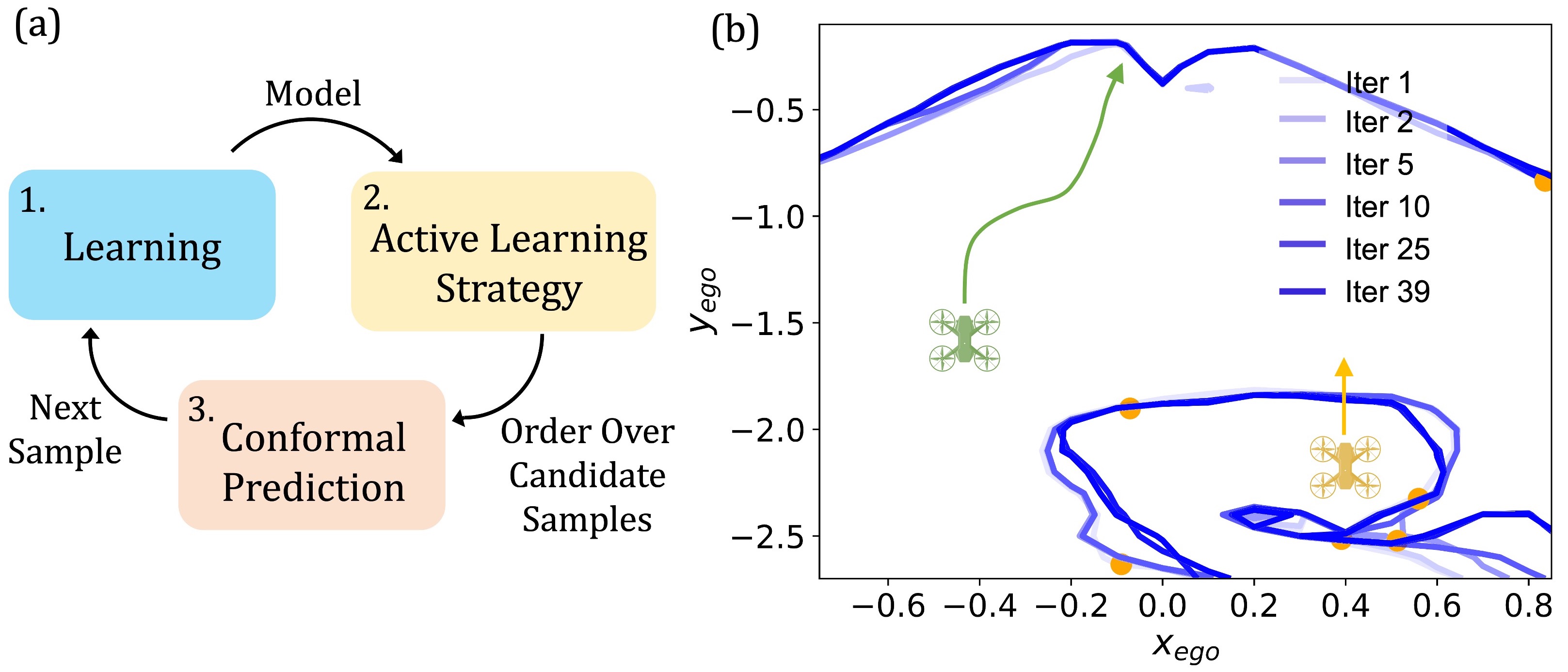}\vspace{-0.2em}
\caption{(a) Our framework provides a probabilistic generalization bound for the \textit{iterative} process of learning and actively sampling the next data point.
The novelty of our method lies in the realization that if we conformally calibrate our active sampling strategy so that it approximates maximizing the true model error, we can achieve a desired probabilistic guarantee.
(b) We demonstrate our method on a drone racing example in which the ego drone (green) overtakes another drone (yellow).
We display the reach-avoid set learned by our method, with the lighter colors indicative of the set at previous iterations and the darker colors for later iterations.
The orange points are samples taken by our calibrated active learning strategy.
}
\label{fig:front_fig}
\end{figure}

However, producing a strong probabilistic guarantee may require taking a large amount of potentially costly samples.
Also, as the calibration procedure \textit{adapts} the reachable set to the environment, the set updates in response to each sample that is taken.
This \textit{iterative} learning and sampling breaks assumptions made by many prior works that the samples be exchangeable or IID.
Finally, there is the question of how to define calibration in the first place.
Understanding how to update a reachable set is a learning problem in itself, which prior works often simplify by making the calibrated reachable set a simple function of the original set.

We address these issues by framing calibration as an active learning problem, through which we can choose only the necessary samples.
At the same time, we provide strong guarantees and high-accuracy reachable sets.
For this, our key insight lies in leveraging the Pick-to-Learn algorithm, introduced by Paccagnan \textit{et al.} \cite{paccagnan2023picktolearn}.
This is a meta-algorithm that turns any supervised learning procedure into a compression scheme, which yields a generalization bound.
We adapt this algorithm for active learning, resulting in our \textit{Approximate Pick-to-Learn} method.
We overview our method in Fig. \ref{fig:front_fig}.
In summary, our contributions are:

\begin{itemize}
    \item We frame reachable set calibration as an active learning problem and provide a novel method to produce a probabilistic guarantee.
    \item We introduce an extension to the Pick-to-Learn algorithm for active learning in the form of Approximate Pick-to-Learn and theoretically establish its probabilistic guarantees.
    \item We demonstrate our methodology in simulation and show how it balances tradeoffs between sample complexity, guarantee strength, and reachable set accuracy better than the baselines.
\end{itemize}

In this paper, we first give an overview of related work in Section \ref{sec:related_work}. We then describe the problem setup in Section \ref{sec:problem_setup}, explain our method in Section \ref{sec:approx_picktolearn}, show our simulation experiments in Section \ref{sec:simulation_exp}, and conclude in Section \ref{sec:conclusion}.

\section{RELATED WORK}\label{sec:related_work}

\subsection{Calibrating Learned Reachability Computations} \label{sec:related_work_A}

The works \cite{lin2023generating} and \cite{lin2024verification}, which serve as our baselines, study the calibration of learned reachable sets using sampled trajectories.
The iterative method \cite{lin2023generating} evaluates ground-truth reachability values at IID samples and adjusts the level set of the value function to obtain zero safety violations, while the robust method \cite{lin2024verification} instead counts violations and incorporates them into a relaxed probabilistic guarantee. Both rely on scenario optimization \cite{campi2018introduction}, trading off the strength of the guarantee with the sample complexity. Our work builds on these ideas with a more sample-efficient, active learning-based calibration approach.

There are also prior works on adapting reachable sets online, such as in response to new sensed information \cite{BajcsyBBTT19}, with respect to changing dynamics or environmental parameters \cite{NakamuraBansal2022OnlineUpdate, BorquezNakamuraBansal2023PCRS, JeongGongBansalHerbert2024ParamFaSTrack}, or in accordance with human feedback about constraints \cite{SantosLPBB25, AgrawalSeoNakamuraTianBajcsy2025AnySafe}.
However, these works constrain the amount of adaptation that can happen and do not have probabilistic guarantees on the final reachable set.

\subsection{Guarantees for Active and Online Learning Settings}

Many works in the multi-armed bandits or Bayesian optimization literature make regret-related theoretical statements \cite{TakenoInatsuKarasuyamaTakeuchi2025PSB_EI, ChengAstudilloDesautelsYue2024PSBAX, ZhangChen2025DirectRegretBO,BubeckCesaBianchi2012Bandits}, which describe the optimality of the active learning process but not how the final model performs on unseen data.
They may otherwise assume that the learned model belongs to a particular hypothesis class \cite{MasonCamilleriMukherjeeJamiesonNowakJain2023LevelSet, FlynnReebKandemirPeters2023PACBayesBandits}.
For example, some Bayesian optimization works produce probabilistic statements on the accuracy of level set estimates, but with the assumption that the underlying function is drawn from a Gaussian process (GP) \cite{Ishibashi2024EpsDeltaLevelSet, InatsuIwazakiTakeuchi2024DRLSE}.
In contrast, we work with arbitrary and potentially black-box models of the reachability value function.

Another line of related work applies conformal prediction \cite{angelopoulos2023conformal} to produce probabilistic safety guarantees in dynamic settings.
For example, \cite{muthali2023multiagent, DixitLindemannWeiCleavelandPappasBurdick2023ACP} use a form of Adaptive Conformal Inference \cite{gibbs2022conformal}, whose guarantee is similar to a regret bound.
Our method uses conformal prediction, but to calibrate the estimate of where it is important to sample next rather than to calibrate the reachable set itself.

\section{Problem Definition}\label{sec:problem_setup}

Let $x \in \mathcal{X}$ denote a state in the bounded state space, and let $x_{t+1} = f(x_t, u_t)$ describe the system dynamics, where $x_t\in\mathcal{X}$ and $u_t \in \mathcal{U}$ denote the state and control at time step $t$, respectively.
For an arbitrary policy $\pi: \mathcal{X} \to U$, the value function induced by the policy is 
\begin{equation}\label{eqn:V_defn}
    V_{\pi}(x, t)\hspace{-0.1cm} :=\hspace{-0.1cm}  \sup_{t'=0,...,t}\hspace{-0.1cm}\min\{\gamma^{t'}r(\xi_x^{\pi}(t')), \min_{\tau=0,...,t'}\gamma^{\tau}\hspace{-0.1cm}c(\xi_x^{\pi}(\tau))\}
\end{equation}
for discount factor $\gamma \in (0,1)$, where $\xi_x^{\pi}: [0, t] \to \mathcal{X}$ represents the trajectory under $\pi$ starting at $x$.
Here, $r: \mathcal{X} \to \mathbb{R}$ and $c: \mathcal{X} \to \mathbb{R}$ are functions such that $r(x) > 0$ indicates that $x$ lies in the target set, and $c(x) > 0$ indicates that $x$ satisfies constraints. 
The super-zero level set of $V_{\pi}(x,t)$ recovers the \emph{reach-avoid set}, which consists of all states from which the policy guides the system to the target set within $t$ stages while satisfying the constraints.

In our experiments, we obtain a policy $\tilde{\pi}(x)$ and associated value function $\tilde{V}(x,t)$ using the RL-based method in \cite{li2025certifiable}.
Since the learned value function may be subject to error, we use it to initialize our active learning calibration process.
Let the \textit{hypothesis} $h: \mathcal{X} \to \mathbb{R}$, with initial hypothesis $h_0$, refer to an estimate of $V_{\tilde{\pi}}(x, T)$ for time horizon $T$.
During the active learning procedure, we iteratively choose $x$'s and observe $V_{\tilde{\pi}}(x,T)$ by rolling out $\tilde{\pi}$ and applying (\ref{eqn:V_defn}). 
This provides samples $z := (x, V_{\tilde{\pi}}(x,T))$.
Let $z_x$ denote the state portion of $z$.
We also assume access to a learning algorithm $L$, which maps a list of samples to a hypothesis, and an error function $e_h(z)$.
In our Approximate Pick-to-Learn setting, this error is binary-valued: $e_h(z) = \mathbf{1}\{h(x) \times V_{\tilde{\pi}}(x, T) \leq 0 \; \land  \; h(x) \neq V_{\tilde{\pi}}(x, T) \}$,
so it is $0$ when $h$ correctly predicts the sign of the value and $1$ otherwise.
Our aim is to improve our hypothesis $h$ while also using the collected data to produce a probabilistic guarantee.

\section{APPROXIMATE PICK-TO-LEARN}\label{sec:approx_picktolearn}

In this section, we describe our adaptation of the Pick-to-Learn algorithm for active learning.
Section \ref{sec:picktolearn_overview} provides an overview of Pick-to-Learn.
We then describe in Section \ref{sec:adapting_picktolearn} the key ingredients that should be adapted for our algorithm: Pick-to-Learn's dataset and hypothesis-dependent total order.
Finally, Section \ref{sec:calibration} describes how we use conditional conformal prediction to produce a calibrated \textit{approximate} total order and obtain our final guarantee \footnote{Code for this project can be found at \url{https://github.com/sdeglurkar/bayes_opt_calibration}.}.

\subsection{The Pick-to-Learn Algorithm} \label{sec:picktolearn_overview}

In the standard supervised learning setting, the Pick-to-Learn algorithm \cite{paccagnan2023picktolearn} produces a generalization bound by turning any learning algorithm into a \textit{compression scheme}. 
For a dataset $D:= \{z_1,...,z_{n_D}\}$ with IID $z_i$'s in $\mathcal{Z}$, the method aims to iteratively build the compressed set $Q \subseteq D$.
In our case, $\mathcal{Z} := \mathcal{X} \times \mathbb{R}$ so each $z$ is a combination of state and value label.
At each iteration, the data point in $D \setminus Q$ on which the current hypothesis performs the worst is added to $Q$.
This is the point that maximizes a \textit{hypothesis-dependent total order} $\leq_h$ defined over $\mathcal{Z}$: $z_i \leq_h z_j \iff e_h(z_i) \leq e_h(z_j), \; \forall z_i, z_j \in \mathcal{Z}$.
Then, $L$ is applied to obtain a new $h$.
The algorithm terminates when $e_h(z) \leq \omega \; \forall z \in D \setminus Q$ for the threshold $\omega$ and returns $Q$ and the final hypothesis $h_f$.

\subsection{Adapting Pick-to-Learn for Active Learning} \label{sec:adapting_picktolearn}

As Pick-to-Learn is designed for offline supervised learning, several aspects need to be changed to apply the technique to active learning.
Firstly, in active learning, a \textit{labeled} dataset $D$ does not exist.
Because of this, we also do not have access to $e_h(z)$ for all $z \in D$, and so secondly, we do not know the hypothesis-dependent total order.

Our response to these dilemmas is to redefine $D$ as an \textit{unlabeled} dataset of $n_D$ IID samples of $x$ from a uniform distribution over the state space; $D := \{x_1,...x_{n_D}\}$.
Though now $D \subseteq \mathcal{X}$, we will allow $Q$ to have labels so that $Q \subseteq \mathcal{Z}$.
In this way, we think of $D$ as a representation of the world in which we are performing active learning; we can choose which elements of $D$ we would like to learn on.
In the Pick-to-Learn algorithm, this choice is precisely adversarial in that it is the point for which $e_h(z)$ is maximized.
However, in the absence of knowing $e_h(z)$ exactly, one can imagine designing an active learning strategy that chooses points that reduce uncertainty, optimize an objective, or follow a heuristic \cite{garnett2023bayesopt}.
Generally speaking, an active learning strategy can be modeled as maximizing some function $a_{h}(x)$.

\begin{algorithm}[t]
\caption{$\mathcal{A}_{C}(D)$ - Approximate Pick-to-Learn for Calibrating Reachable Sets}
\label{alg:approx_picktolearn}
\begin{algorithmic}[1]
\REQUIRE $Q = \emptyset$, $h = h_0$, $C = \{z_1^C,...,z_{n_C}^C\}$, 
$\lambda_{h_0,C} = Quantile_{\lceil(1-\alpha)(n_C+1)\rceil/n_C}$
$\Bigl\{\frac{|e_{h_0}(z_1^C) - a_{h_0}(z_{1,x}^C)|}{\mu_{h_0}(z_{1,x}^C)},...,\frac{|e_{h_0}(z_{n_C}^C) - a_{h_0}(z_{{n_C},x}^C)|}{\mu_{h_0}(z_{{n_C},x}^C)}\Bigr\}$, $\hat{e}_{h_0,C} = a_{h_0} + \lambda_{h_0,C}\mu_{h_0}$, $\bar{x} = \arg \max _{x \in D} \hat{e}_{h_0,C}(x)$

\WHILE{$\hat{e}_{h, C}(\bar{x}) \geq \omega$}
    \STATE \text{Obtain} $V_{\tilde{\pi}}(\bar{x}, T)$ \text{and define} $\bar{z} := (\bar{x}, V_{\tilde{\pi}}(\bar{x}, T))$
    \STATE $Q \leftarrow Q \cup \{\bar{z}\}$ 
    \STATE $h \leftarrow L(Q)$
    \RETURN $h, Q$, \textbf{ if } $D \setminus Q = \emptyset$
    \STATE $\lambda_{h,C} \leftarrow Quantile_{\lceil(1-\alpha)(n_C+1)\rceil/n_C}$
     $\Bigl\{\frac{|e_h(z_1^C) - a_{h}(z_{1,x}^C)|}{\mu_{h}(z_{1,x}^C)},...,\frac{|e_h(z_{n_C}^C) - a_{h}(z_{{n_C},x}^C)|}{\mu_{h}(z_{{n_C},x}^C)}\Bigr\}$
    \STATE $\hat{e}_{h,C} \leftarrow a_{h} + \lambda_{h,C} \mu_{h}$ 
    \STATE $\bar{x} \leftarrow \arg \max _{x \in D \setminus Q} \hat{e}_{h,C}(x)$  
\ENDWHILE
\RETURN $h, Q$
\end{algorithmic}
\end{algorithm}

\subsection{Calibrating the Active Learning Strategy with Conditional Conformal Prediction} \label{sec:calibration}

In Pick-to-Learn, an upper bound is produced on the \textit{risk} of the hypothesis, which is defined as $P_z(e_h(z) \geq \omega | D)$.
However, if the termination condition of our algorithm were $a_{h}(x) \leq \omega \; \forall x \in D$, we would only obtain an upper bound on $P_x(a_{h}(x) \geq \omega | D)$.
To address this issue, we leverage conformal prediction with calibration set $C$ to produce a value $\hat{e}_{h,C}(x)$, serving as an adjusted version of $a_{h}(x)$, so that with high probability, both $e_h(z) \geq \omega$ and $\hat{e}_{h, C}(z_x) \geq \omega$ are true for any $z_x$ and $h$.

We detail our method in Algorithm \ref{alg:approx_picktolearn}.
Following one technique given in \cite{angelopoulos2023conformal}, we construct an interval around $a_{h}(x)$.
Given a heuristic value $\mu_{h}(x)$ that serves as an estimate of the distance between $a_{h}$ and $e_h$, conditional conformal prediction calculates $\lambda_{h, C} \in \mathbb{R}$ so that:

\begin{equation} \label{eqn:conf_prediction}
\begin{split}
    P_C[1-\alpha-\epsilon_{\alpha} \leq P_z[e_h(z) \in [a_{h}(z_x) - \lambda_{h,C} \mu_{h}(z_x), \\ a_{h, \eta}(z_x)\hspace{-0.05cm} +\hspace{-0.05cm} \lambda_{h,C} \mu_{h,}(z_x) \;\hspace{-0.05cm} |\hspace{-0.05cm} \; C, D] \hspace{-0.1cm}\leq 1\hspace{-0.1cm}-\alpha+\epsilon_{\alpha} \; \hspace{-0.05cm}|\hspace{-0.05cm} \; D] \hspace{-0.05cm}\geq\hspace{-0.05cm} 1-\beta
\end{split}
\end{equation}
for calibration set $C := \{z_1^C,...,z_{n_C}^C\}$ containing IID state samples separate from $D$.
In essence, $C$ is a set of state samples for which we have value labels and therefore know $e_h(z)$ for calibration purposes.
$\lambda_{h,C}$ is the $\lceil(1-\alpha)(n_C+1)\rceil/n_C$ quantile of $\{s(z_1^C),...,s(z_{n_C}^C)\}$ for score function $s(z) := |e_h(z) - a_{h}(z_x)|/\mu_{h}(z_x)$ (Line 6 in Algorithm \ref{alg:approx_picktolearn}).
From \cite{angelopoulos2023conformal}, the inner probability in Equation (\ref{eqn:conf_prediction}) follows a Beta distribution.
For some constant $\beta$, the size of $C$ is calculated such that $1-\beta$ probability mass of the distribution remains within $1-\alpha-\epsilon_{\alpha}$ and $1-\alpha+\epsilon_{\alpha}$ for tolerance $\epsilon_{\alpha}$.

Now, we write $\hat{e}_{h, C}(x) := a_{h}(x) + \lambda_{h,C} \mu_{h}(x)$ (Line 7).
This produces a total order $\hat{\leq}_{h}$ that \textit{approximates} $\leq_h$ over $z_x$: $x_i \; \hat{\leq}_{h} \; x_j \iff \hat{e}_{h, C}(x_i) \leq \hat{e}_{h, C}(x_j), \; \forall x_i, x_j \in \mathcal{X}$.
We impose that $\mu_{h}(x)$ must never be $0$.
The Pick-to-Learn algorithm can now be performed as is except by maximizing $\hat{e}_{h,C}(x)$ at every step (Line 8) and with the termination condition $\hat{e}_{h,C}(x) \leq \omega$ for all states in $D \setminus Q$ (Line 1). 
We present our final guarantee with the following theorem:

\begin{theorem}\label{thm:guarantee}
    Let $\mathcal{A}_{C}$ be Algorithm \ref{alg:approx_picktolearn}, such that $\mathcal{A}_{C}(D) = (h_f, Q)$. Pick $\alpha, \epsilon_{\alpha}, \beta,\delta \in (0, 1)$. 
Then,
\begin{equation*}
    P_{C,D}[P_z(e_{h_f}(z) \geq \omega \; | \; C,D) \leq \bar{\epsilon}(|Q|, \delta) + \alpha + \epsilon_{\alpha}] \geq 1-\beta-\delta,
\end{equation*}
where $\bar{\epsilon}(n,\delta)$ for $n = |Q|$ is the unique solution of $\Psi_{n,\delta}(\epsilon) = \frac{\delta}{2n_D} \sum_{m=n}^{n_D-1} \frac{\binom{m}{n}}{\binom{n_D}{n}} (1-\epsilon)^{-(n_D-m)} + \frac{\delta}{6n_D}\sum_{m=n_D+1}^{4n_D} \frac{\binom{m}{n}}{\binom{n_D}{n}} (1-\epsilon)^{m-n_D} = 1$ in the interval $[n/n_D, 1]$ and $\bar{\epsilon}(n_D,\delta)~=~1$.
\end{theorem}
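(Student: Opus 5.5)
The proof splits the risk into two events: one controlled by the compression-scheme argument of Pick-to-Learn, the other by the conditional conformal guarantee (\ref{eqn:conf_prediction}). A union bound then gives the failure probability $\beta+\delta$.

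\textbf{Step 1: compression bound conditional on $C$.} Fix the calibration set $C$; it is independent of $D$. For fixed $C$, the map $D\mapsto(h_f,Q)$ of Algorithm~\ref{alg:approx_picktolearn} is a Pick-to-Learn instance with the total order $\hat{\leq}_h$. This order depends only on $h$ and on the fixed $C$ through $\lambda_{h,C}$, and not on $D$. Hence the usual preference and stability properties hold. Removing from $D$ any point outside $Q$ leaves the run unchanged, since the same maximizers are selected and termination depends only on the remaining points.

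Two differences from the standard setting must be checked. First, $D$ is unlabeled while $Q$ carries labels. Labels are deterministic functions of $x$ (obtained by rolling out $\tilde{\pi}$), so the labeled dataset is IID in $\mathcal{Z}$ and the algorithm can be viewed as acting on it. Second, termination is $\hat{e}_{h,C}(x)<\omega$ for all $x\in D\setminus Q$. We therefore apply the compression theorem of \cite{paccagnan2023picktolearn} with the ``approximate'' error indicator $\mathbf{1}\{\hat{e}_{h,C}(x)\geq\omega\}$ in place of $\mathbf{1}\{e_h(z)\geq\omega\}$. That theorem, for a preferent compression scheme with $\Psi_{n,\delta}$ as stated, gives
\[
P_D\bigl[P_x(\hat{e}_{h_f,C}(x)\geq\omega\mid C,D)\leq\bar{\epsilon}(|Q|,\delta)\mid C\bigr]\geq 1-\delta .
\]
Integrating over $C$ keeps the bound $1-\delta$ for the joint probability.

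\textbf{Step 2: transfer from $\hat{e}$ to $e$.} We bound
\[
P_z(e_{h_f}(z)\geq\omega)\leq P_x(\hat{e}_{h_f,C}(z_x)\geq\omega)+P_z\bigl(e_{h_f}(z)\geq\omega,\ \hat{e}_{h_f,C}(z_x)<\omega\bigr).
\]
On the second event, $e_{h_f}(z)>\hat{e}_{h_f,C}(z_x)=a_{h_f}(z_x)+\lambda_{h_f,C}\mu_{h_f}(z_x)$, so $e_{h_f}(z)$ falls outside the conformal interval. By (\ref{eqn:conf_prediction}) applied with $h=h_f$, with probability at least $1-\beta$ over $C$ the coverage is at least $1-\alpha-\epsilon_\alpha$. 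Hence the second term is at most $\alpha+\epsilon_\alpha$. A union bound over the failure events of Steps 1 and 2 yields the stated $1-\beta-\delta$. The boundary case $|Q|=n_D$ is trivial because $\bar{\epsilon}=1$.

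\textbf{Main obstacle.} The delicate point is applying (\ref{eqn:conf_prediction}) to $h=h_f$, which depends on $C$ through the order used to pick points. Standard split conformal requires $h$ to be independent of $C$. I would handle this by reading (\ref{eqn:conf_prediction}) as the paper states it, conditional on $D$ and holding for the hypothesis in use, which is the justification for recomputing $\lambda_{h,C}$ at each iteration on Line~6. If that is insufficient, I would instead argue via a uniform statement: the conformal coverage event holds simultaneously for every $h$ the algorithm can output on the fixed $D$. This would require either a finite class or accepting the paper's assumption as given. I would state this assumption explicitly in the proof.
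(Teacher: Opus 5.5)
Your proof is correct to the same standard as the paper's and is essentially its argument. The paper splits $P_z(e_{h_f}(z)\geq\omega \mid C,D)$ on the change-of-compression event rather than on $\{\hat{e}_{h_f,C}(z_x)\geq\omega\}$, but its Lemma~3 (no change of compression forces $\hat{e}_{h_f,C}(z_x)\leq\omega$) is exactly the containment your black-box use of Pick-to-Learn with the surrogate error relies on. Your remaining ingredients are its Lemmas~1, 2 and~4: preference given fixed $C$, the conditional conformal bound, and the union bound joining a statement over $C$ with one over $D$.

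The obstacle you flag is real, and the paper does not resolve it either. Its Lemma~2 simply asserts the conformal bound for every hypothesis arising during the run, $h_f$ included. So stating (\ref{eqn:conf_prediction}) for $h_f$ as an explicit assumption is precisely what the paper's proof relies on. Your uniform alternative is actually available: for fixed $D$ with deterministic labels and a seeded $L$, the possible outputs $L(Q)$ form a finite class. The cost is a union bound over that very large class, which would greatly inflate $\beta$.
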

\begin{proof}
We provide a proof sketch here and defer the detailed proof to the Appendix.
We use Theorem 4 from Campi and Garatti (2023) \cite{campi2023compression} to probabilistically bound a quantity called the \textit{probability of change of compression}.
The theorem can be used to bound the probability that the error of the hypothesis is high ($e_{h_f}(z) \geq \omega$) if the event of the error being high is related to the event of the change of compression.
Thus, we take the following general steps: we first establish that with high probability over $C$, it is unlikely that $e_h(z) \geq \omega$ but $\hat{e}_{h,C}(z_x) \leq \omega$.
We then relate the event $e_h(z) \leq \omega \; \lor \; \hat{e}_{h,C}(z_x) \geq \omega$ to the fact that $\mathcal{A}_{C}$ would produce a different output $Q$ were it given that additional $z$ (change of compression).
Then, we can combine various probabilistic statements to arrive at our final statement.
\end{proof}

Theorem~\ref{thm:guarantee} suggests that, with high probability, the learned value function has low error on unseen states even though the samples used for verification came from an adaptive process.

\begin{remark}
Although we have designed a deterministic algorithm, we would like to allow for practitioners to incorporate exploration in choosing $x$ by allowing $a_h(x)$ to be a function of a random variable.
This would mean that $\mathcal{A}_C$ should be \textit{seeded} in practice so that it outputs the same quantities for the same random seed.
Indeed, we implement this technique in our experiments.
Along the same vein, since $\hat{e}_{h,C}$ should induce a total order, then if there are ties, where $\hat{e}_{h, C}(x_i) = \hat{e}_{h, C}(x_j)$ for $x_i \neq x_j$, a tie-breaking criterion can be implemented using a small amount of random noise: $\hat{e}_{h, C}(x) := a_{h}(x) + \lambda \mu_{h}(x) + \sigma(x)$ for positive-valued $\sigma(x)$ such that $\hat{e}_{h, C}(x) \geq a_{h}(x) + \lambda \mu_{h}(x) \; \forall x \in \mathcal{X}$ still holds. 
Note that for our algorithm, we only impose that $\hat{e}_{h, C}$ induces a total order and not that $e_h$ induces a total order, in contrast with Pick-to-Learn.
\end{remark}

\section{SIMULATION EXPERIMENTS}\label{sec:simulation_exp}

\noindent\textbf{Experimental Setup.} We consider the example in \cite{li2025certifiable} of an ego drone competing with another drone to reach a gate.
For this reachability problem, the state is $12$-dimensional, consisting of the two agents' $6$-dimensional dynamics that are described by double integrators in the three spatial dimensions.
The reward function $r(x)$ incentivizes the ego drone to remain ahead of the other drone, move faster than the other drone, and approach the gate within a corridor.
The constraint function $c(x)$ imposes that the ego drone approach the gate within a certain angle, maintain an altitude limit, and stay away from the other agent and its approximate region of downwash. 
We produce $V_{\tilde{\pi}}(x,T)$ values by applying $\tilde{\pi}$ for the ego drone and applying a PID controller for the other drone for $T=30$ steps.

In order to avoid retraining a neural network at every iteration, we model $h_0(x)$ as a GP that uses $\tilde{V}(x,t)$ as part of its prior.
Thus, the total sample complexity of our method for these experiments is $|Q| + |C|$ plus 40 initial samples for fitting $h_0(x)$.
We opt for a heuristic active learning strategy in which we prioritize points near the boundary of the reachable set, where it is most likely that the model is incorrect.
Since the zero-level set of the value function defines the boundary, $a_{h}(x)$ simply uses the absolute value of the value function and is normalized so that $a_{h}(x) \in [0, 1]$: $a_{h}(x) := 1-|h(x)|/(\max_{x} |h(x)|)$.

\begin{remark}
We are able to keep a low sample complexity by sampling $C$ once and using it throughout the algorithm.
The dependency between $C$ and $h$ is quite weak since $h$ is learned based on only samples from $D$.
Our empirical findings validate the theoretical results of Theorem 1.
\end{remark}

Note that our algorithm terminates when $\hat{e}_{h,C}(x) \leq \omega  \; \forall x \in D \setminus Q$, and since $\hat{e}_{h, C}(x) \geq a_{h}(x)$, if $a_{h}(x)$ is never below $\omega \; \forall x \in D \setminus Q$ during the course of the algorithm then neither is $\hat{e}_{h,C}(x)$.
This is indeed the case for our particular strategy $a_{h}(x)$.
To remedy this, we adopt the engineering solution of decaying $a_{h}(x)$ at every iteration of the while loop in Algorithm \ref{alg:approx_picktolearn} before the conformal prediction step is applied using a multiplicative exponential factor $\zeta \in [0,1]$, thus maintaining the guarantee.
Also, we heuristically define $\mu_{h}(x)$ as $0.1 \times a_{h}(x)$.
The value of $\hat{e}_{h,C}(x)$ at every iteration depends on the values of the score functions in the calibration dataset.
If for some $z_i^C \in C$ $a_{h}(z_{i,x}^C)$ is small but $e_h(z_i^C) = 1$, $\hat{e}_{h,C}(x)$ may remain above $\omega$ for some $x \in D$.
It is only when discrepancies between $a_{h}$ and $e_h$ lessen, usually by values of $e_h$ in $C$ going to zero, that the algorithm is allowed to terminate.
We set $\omega = 0.3$, $\zeta = 0.95$, and $\delta = 1e^{-4}$.
\\

\begin{figure}[!t]
\centering
\includegraphics[width=2.7in]{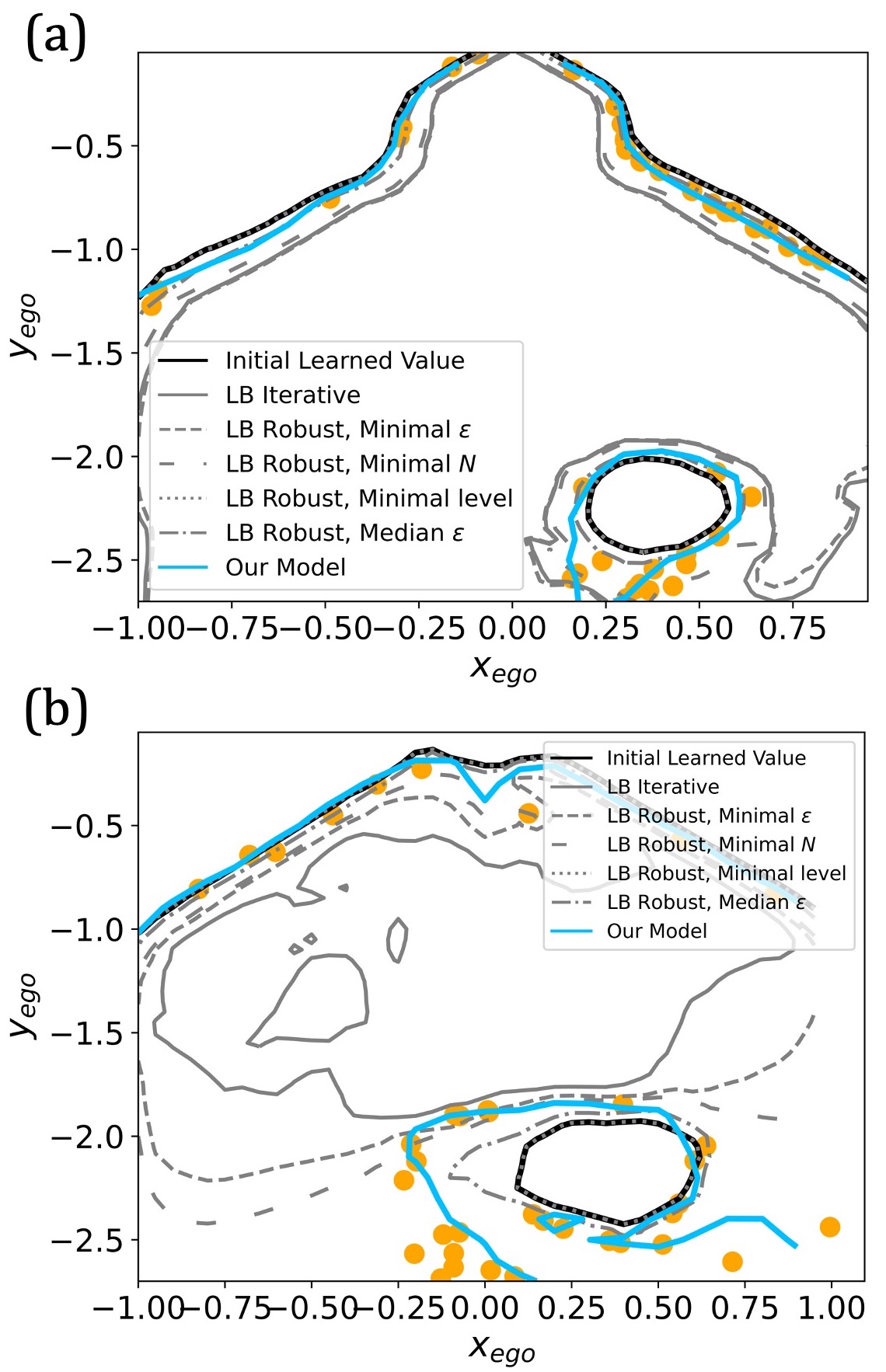}\vspace{-0.4em} 
\caption{(a) In this 2-dimensional ``Slice 1'', the ego drone's 3D velocity is set to $[0.0, 0.7, 0.0]$ and its altitude is $0.0$.
The other drone's 3D spatial coordinates are $[0.4, -2.2, 0.0]$ and its 3D velocity is $[0.0, 0.3, 0.0]$.
(b) In this 2-dimensional ``Slice 2'', the other drone's state is the same, but the ego drone's velocity is $[0.0, 0.0, -0.5]$ and its altitude is $0.05$.
The baselines' calibration technique is to simply choose an appropriate level of the learned value function.
For the more commonly seen scenario (a), the level sets have more regular shapes, better justifying this calibration technique.
However, for the less commonly seen scenario (b), this is not the case, and it is more reasonable to be unconstrained by learned set geometries, as in our method.
In both figures, orange points are the samples that our method took.
}
\label{fig:slices}
\end{figure}

\noindent\textbf{Baselines and Metrics.} We compare our method with the two baselines described in Section \ref{sec:related_work_A}: \cite{lin2023generating}, which we call LB Iterative after the names of the authors, and \cite{lin2024verification}, which we denote as LB Robust. 
Both guarantee $P_{x \in \hat{\mathcal{S}}}(V_{\tilde{\pi}}(x,T) \leq 0) \leq \epsilon_{LB}$ with probability at least $1-\beta$ over sampled data, where $\hat{\mathcal{S}}$ is the calibrated set. We set $\beta = 0.1$ for all methods. 
Notably, our probabilistic guarantee is with respect to a general notion of error, whereas the baselines' guarantee only constrains the false inclusion of unsafe states.

We implement the LB Robust method to highlight the trade-offs involved in calibration.
We vary the number of samples over ($[50, 200, 250, 300, 350, 500, 750, 1000]$) and the levels of the learned value function ($[0.0, 0.05, 0.1, 0.15,$ $ 0.2, 0.3, 0.5, 0.75, 0.9, 1.0]$).
For each combination of these, we implement the method to obtain the $\epsilon_{LB}$ value.
Then, over all combinations we take the one with the minimal $\epsilon_{LB}$ (``LB Robust, Minimal $\epsilon$''), the minimal $\epsilon_{LB}$ for the smallest number of samples (``LB Robust, Minimal $N$''), the minimal $\epsilon_{LB}$ for the smallest level (``LB Robust, Minimal level''), and finally the median $\epsilon_{LB}$ (``LB Robust, Median $\epsilon$'').

For all methods, we compare the total number of samples required, $\epsilon_{LB}$ and $\bar{\epsilon}(|Q|, \delta) + \alpha + \epsilon_{\alpha}$, and the false positive rate and false negative rate (FPR and FNR) of the final sets.
Here, FPR and FNR are calculated by gridding the state space, computing the ground truth values at the grid points, and checking whether the calibrated set includes (a ``positive'') and excludes (a ``negative'') the correct points.
\\

\noindent\textbf{Results and Discussion.} We vary the type of active learning technique we use, the dimension and slice of the state space that we operate in, and the values of $\alpha$ and $\epsilon_{\alpha}$. 
Fig. \ref{fig:slices} visualizes the impact of some of these variations in 2-dimensional slices of the 12-dimensional system.
Compared to Fig. \ref{fig:slices}(a), the states in the slice in Fig. \ref{fig:slices}(b) are less likely to have been seen during the learning of $\tilde{V}(x,t)$.
Since the baseline methods can only choose among the levels of the learned value function, they struggle more when the learned function is less accurate.
Our method, meanwhile, is not constrained by the geometries of the learned level sets.

\begin{figure*}[t]
\centering
\includegraphics[width=7.0in]{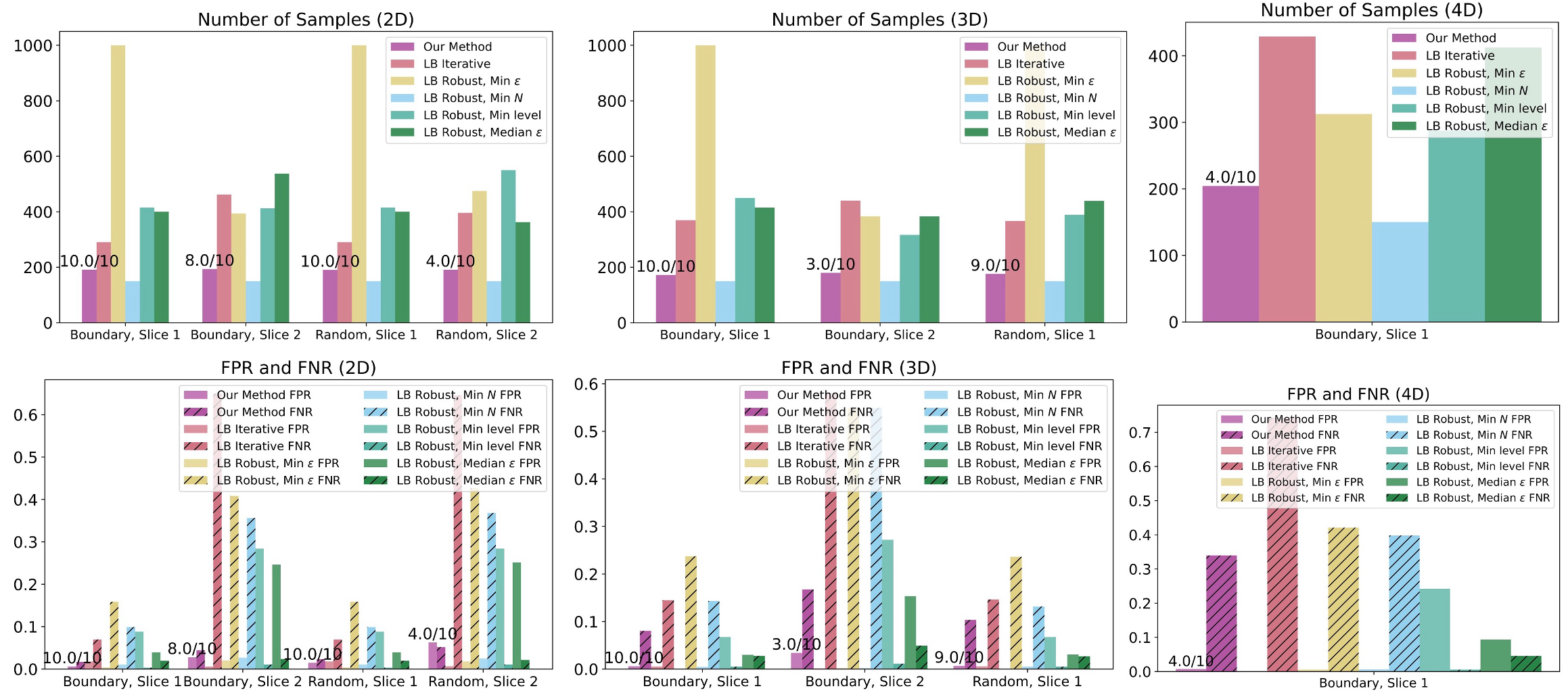}\vspace{-0.2em}
\caption{Plots comparing the number of samples and FPR/FNR across methods.
``Boundary'' denotes our boundary sampling active learning technique, while ``Random'' refers to setting $a_{h}(x)$ to random values.
In 3D experiments, the slice of the dynamics is the same as that for the 2D Slice 1 except the ego drone's altitude is also allowed to vary.
In 4D experiments, we additionally allow the ego agent's velocity in the x-direction to vary.
The numbers above the bars indicate how many seeds out of 10 were successful for our method.
Our method balances the competing objectives of minimizing the number of samples and FPR and FNR well compared with the baselines.
As we expect, performance for all methods is worse for Slice 2 compared with Slice 1.
}
\label{fig:bar_charts}
\end{figure*}

Fig. \ref{fig:bar_charts} provides comparisons of the methods for different variable combinations.
Each experiment is run for 10 random seeds, and comparisons are only made for successful seeds.
Here, failure is defined as more than 70 iterations of our algorithm, since the method is then no longer adaptive with minimal samples.
Our method uses the least number of samples except for the ``LB Robust, Minimal $N$'' method, which has a worse FNR. 
As seen in Table \ref{tab:N_results}, the tightest probabilistic bound is achieved by the ``LB Robust, Minimal $\epsilon$'' method, although it also uses the most samples.
Combining conformal prediction with Pick-to-Learn in our work weakens our guarantee in comparison.
However, we emphasize that this is because our method incorporates \textit{learning and adaptation} with respect to any error function, which is necessary for many real-world scenarios.
Our method also more effectively trades off between FNR and FPR compared with the baselines, which may minimize one at the expense of the other.

Across our method variants, boundary and random active learning require a similar number of samples. Though seemingly counterintuitive, this reflects a key property of probabilistic guarantees.
The required number of iterations depends on the particular calibration set $C$ and how well the model learns per sample.
Especially if $\alpha$ and $\epsilon_{\alpha}$ are larger values, the $\hat{e}_{h,C}(x) \leq \omega$ condition may not be as difficult to satisfy within a set number of iterations.
However, even if the Boundary and Random experiments yield a similar probabilistic guarantee due to using a similar number of samples, the final reachable set learned via the boundary sampling strategy tends to have a better FPR and FNR.

Finally, we study how our method's performance scales with increasing state dimension, where model learning becomes more difficult and reducing errors in $C$ becomes harder. In these cases, relaxing $\alpha$ and $\epsilon_{\alpha}$ reduces the required number of samples, at the cost of higher FPR and FNR. 
We use $(\alpha,\epsilon_\alpha)=(0.05, 0.03)$ for the 2D experiments ($|C| = 112$, $|D| = 4000$), $(0.1, 0.05)$ for the 3D experiments ($|C| = 93$, $|D| = 4000$), and $(0.15, 0.05$) for the 4D experiments ($|C| = 112$, $|D| = 8000$). 
By uniting the processes of learning
and probabilistic assurance, our framework allows for a formal examination
of how they interact and trade off.

\begin{table}[t]

\caption{
Comparing $\bar{\epsilon}(|Q|, \delta) + \alpha + \epsilon_{\alpha}$ and $\epsilon_{LB}$ values to understand the tightness of the probabilistic bounds - lower values generate stronger statements. ``LB Robust, Min $\epsilon$'' always has the lowest $\epsilon$, but it uses the most samples.
} \label{tab:N_results}

\vspace{0.3em}
\centering 
\setlength{\tabcolsep}{4.1pt} 
\begin{tabular}{|c|*{6}{c|}}

\hline

Setting & Ours & LB & LB & LB & LB & LB \\

\hfill & \hfill & Iterative & Robust & Robust & Robust & Robust  \\

& \hfill & \hfill & Min $\epsilon$ & Min $N$ & Min Level & Median $\epsilon$ \\

\hline

\multicolumn{3}{|c}{} & \multicolumn{1}{c}{2D} & \multicolumn{3}{c|}{}

\\

\hline 

Boundary, & 0.099 & 0.05 & 0.003 & 0.016 & 0.035 &  0.024
 \\

Slice 1 & \hfill & \hfill & \hfill & \hfill & \hfill
 & \hfill \\

\hline

Boundary, & 0.1 & 0.05 & 0.01 & 0.021 & 0.093 & 0.094 \\

Slice 2 & \hfill & \hfill & \hfill & \hfill & \hfill
 & \hfill \\

\hline

Random, & 0.099 & 0.05 & 0.003 & 0.016 & 0.035 & 0.024 \\

Slice 1 & \hfill & \hfill & \hfill & \hfill & \hfill
 & \hfill \\

\hline

Random, & 0.099 & 0.05 & 0.009 & 0.019 & 0.096 & 0.095 \\

Slice 2 & \hfill & \hfill & \hfill & \hfill & \hfill
 & \hfill \\

\hline

\multicolumn{3}{|c}{} & \multicolumn{1}{c}{3D} & \multicolumn{3}{c|}{}

\\

\hline 

Boundary, & 0.169 & 0.05 & 0.003 & 0.016 & 0.040 & 0.028

 \\

Slice 1 & \hfill & \hfill & \hfill & \hfill & \hfill
 & \hfill \\

\hline

Boundary, & 0.172 & 0.05 & 0.007 &  0.016 & 0.160 & 0.118 \\

Slice 2 & \hfill & \hfill & \hfill & \hfill & \hfill
 & \hfill \\

\hline

Random, & 0.171 & 0.05 & 0.003 &  0.016 & 0.039 & 0.028 \\

Slice 1 & \hfill & \hfill & \hfill & \hfill & \hfill
 & \hfill \\

\hline

\multicolumn{3}{|c}{} & \multicolumn{1}{c}{4D} & \multicolumn{3}{c|}{}

\\

\hline 

Boundary, & 0.212 & 0.05 & 0.015 & 0.026 & 0.203 & 0.111 \\

Slice 1 & \hfill & \hfill & \hfill & \hfill & \hfill
 & \hfill \\

\hline

\end{tabular}

\end{table}

\section{CONCLUSIONS}\label{sec:conclusion}

In this work, we (1) introduce a method to use active learning to both use fewer samples and gain more accuracy when calibrating reachable sets and (2) produce an extension to the Pick-to-Learn algorithm that allows us to provide a probabilistic guarantee for active learning.
In this way, our work allows for an adaptive notion of safety.
We believe that our method can be used not only for calibration but also for reachable set synthesis more generally.
In our next steps, we would like to apply these principles in hardware experiments.
From our analyses, we additionally see that our intuitive understanding of risk in dynamical systems does not always align with the probabilistic statements made by statistical techniques. 
For example, the size of $C$ prescribed by conformal prediction theory is not dependent on the dimension of the state space even though it is intended to capture model error.
Also, the relationship between probabilistic guarantees, set conservatism, and learning difficulty warrants further investigation in future work.

\bibliographystyle{ieeetr}
\bibliography{reference}

\section*{APPENDIX}

In this Appendix, we provide a full proof of Theorem 1.
In order to do so, we introduce several Lemmas.
All theoretical statements are provided below.

\subsection{Theoretical Statements}
\begin{lemma}
    Let $\kappa_{C}$ be our compression function; that is, $\kappa_{C}(D) := Q$.
    Then, $\kappa_{C}$ is a preferent compression function given $C$.
\end{lemma}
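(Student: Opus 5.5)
Recall the definition from Campi and Garatti (2023) \cite{campi2023compression}. A compression function $\kappa$, mapping a finite list of samples to a sub-list, is \emph{preferent} if two properties hold. First, $\kappa(\kappa(D)) = \kappa(D)$, or more precisely: for any $D$ and any $D'$ with $\kappa(D) \subseteq D' \subseteq D$, we have $\kappa(D') = \kappa(D)$. Second, for any $D$ and any additional point $z$, if $\kappa(D \cup \{z\}) \neq \kappa(D)$, then $\kappa(D \cup \{z, z'\}) \neq \kappa(D \cup \{z'\})$ for every further point $z'$; equivalently, a point that changes the compression keeps doing so when more points are added.

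Throughout, we fix $C$ and any seed as in the Remark. Then $\lambda_{h,C}$ and $\hat{e}_{h,C}$ are deterministic functions of the current hypothesis $h = L(Q)$, so $\mathcal{A}_C$ is a deterministic function of $D$. Ties are broken so that $\hat{\leq}_h$ is a strict total order. The argument then follows the Pick-to-Learn preference proof of \cite{paccagnan2023picktolearn}, with $e_h$ replaced by $\hat{e}_{h,C}$. The key point is that the selection rule $\arg\max_{x\in D\setminus Q}\hat{e}_{h,C}(x)$ and the stopping rule $\max_{x \in D\setminus Q}\hat{e}_{h,C}(x) < \omega$ depend on $D$ only through which points remain available. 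Labels of points in $D$ are never used until those points enter $Q$, and $C$ is external to $D$.

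\emph{Step 1 (sub-dataset property).} Let $Q=\{\bar z_1,\dots,\bar z_k\}$ be the picks in order, with hypotheses $h_0,\dots,h_k$. Take $D'$ with $\{\bar z_{1,x},\dots,\bar z_{k,x}\}\subseteq D'\subseteq D$. We show by induction on $j$ that the run on $D'$ has the same first $j$ picks and the same $h_j$. At step $j$, $\bar z_{j+1,x}$ maximizes $\hat{e}_{h_j,C}$ over $D\setminus Q_j$, and it lies in $D'\setminus Q_j\subseteq D\setminus Q_j$. Therefore it is also the maximizer on $D'$. At termination, $\max_{D\setminus Q}\hat{e}_{h_k,C}<\omega$ implies the same bound over $D'\setminus Q$. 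The case $D\setminus Q=\emptyset$ forces $D'=D$. So $\kappa_C(D')=\kappa_C(D)$.

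\emph{Step 2 (preference / change-of-compression monotonicity).} Fix $z$ and $z'$, and let $Q=\kappa_C(D)$. We trace the run on $D\cup\{z\}$. Either $z_x$ is never ranked first before the run on $D$ stops, in which case, by the same induction as in Step 1, $\kappa_C(D\cup\{z\})=Q$. Or at some step $j\le k$, $\hat{e}_{h_j,C}(z_x)$ beats the next pick, or exceeds $\omega$ at termination, and then $z$ enters the compression. So a change of compression happens exactly when $z$ gets picked.

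Now suppose $\kappa_C(D\cup\{z\})\neq\kappa_C(D)$ but $\kappa_C(D\cup\{z,z'\})=\kappa_C(D\cup\{z'\})$. Then $z$ is not picked when $z'$ is present. Step 1 applied to the larger set shows that removing $z$ does not alter the run. Comparing the run on $D\cup\{z'\}$ with the run on $D$ yields two cases. If $z'$ is not picked, the two runs coincide with the run on $D$ itself; since $z$ is picked in the run on $D\cup\{z\}$, it would also be picked here, a contradiction. If $z'$ is picked, we must show that $z$ would still be picked at some later stage. This is the main obstacle: after $z'$ enters $Q$ the hypotheses diverge, so domination of $z$ is not automatic.

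The route I would try first is to use the preferent definition in the form Campi and Garatti actually need: $\kappa(D\cup\{z\})\neq\kappa(D)$ implies $z\in\kappa(D\cup\{z\})$. I would combine this with the consistency property from Step 1 and argue via Theorem 4's requirements rather than the stronger two-point statement. If the stronger statement is required, I would instead verify that Algorithm~\ref{alg:approx_picktolearn} is an instance of the generic Pick-to-Learn meta-algorithm with the total order $\hat{\leq}_h$, which is valid because $\hat{e}_{h,C}$ depends only on $h$ and the fixed $C$. I would then invoke the preference result of \cite{paccagnan2023picktolearn} directly, checking that its proof never uses that the order is induced by the true error.
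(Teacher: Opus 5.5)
\textbf{The gap is in your definition of preference, not in your argument.} The second property you list (if $\kappa_C(D\cup\{z\})\neq\kappa_C(D)$ then $\kappa_C(D\cup\{z,z'\})\neq\kappa_C(D\cup\{z'\})$ for every $z'$) is not part of Campi--Garatti preference. Worse, it fails in general for Algorithm~1, so the ``main obstacle'' in your Step~2 cannot be overcome by any route. Here is a counterexample:
\begin{itemize}
\item Take a nonempty $D$ with $\hat{e}_{h_0,C}<\omega$ on all of $D$.
\item Take two further points with $\hat{e}_{h_0,C}(z'_x)>\hat{e}_{h_0,C}(z_x)\geq\omega$.
\item Let $h':=L(\{z'\})$ satisfy $\hat{e}_{h',C}<\omega$ on $D\cup\{z_x\}$. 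Nothing about $a_h$, $\mu_h$ or $L$ excludes this.
\end{itemize}
Then $\kappa_C(D)=\emptyset$ and $z\in\kappa_C(D\cup\{z\})$, yet $\kappa_C(D\cup\{z,z'\})=\{z'\}=\kappa_C(D\cup\{z'\})$. The same example with $e_h$ in place of $\hat{e}_{h,C}$ breaks ordinary Pick-to-Learn. So your second fallback, invoking Pick-to-Learn's preference result ``if the stronger statement is required,'' would not deliver this property either. That citation is legitimate only because preference does not ask for it.

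Your first fallback is the right instinct, and it needs no further argument. Preference, as verified in Lemma~A.6 of Pick-to-Learn and cited by the paper, says: if $V\subseteq\mathcal{A}$ and $\kappa(\mathcal{A})\neq V$, then $\kappa(\mathcal{B})\neq V$ for every $\mathcal{B}\supseteq\mathcal{A}$. Adding points one at a time, this reduces to the statement the paper quotes: $\kappa_C(D\cup\{z\})\neq\kappa_C(D)$ forces $\kappa_C(D\cup\{z\})\neq V$ for all $V\subseteq D$, i.e.\ $z\in\kappa_C(D\cup\{z\})$.

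This is just the contrapositive of your Step~1. If $\kappa_C(\mathcal{B})=V\subseteq\mathcal{A}\subseteq\mathcal{B}$, then Step~1 with $D:=\mathcal{B}$ and $D':=\mathcal{A}$ gives $\kappa_C(\mathcal{A})=V$. The one-point form is also exactly what the first half of your Step~2 shows: the compression changes only if $z$ gets picked.

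So drop the second property and the two-point case analysis. Step~1 is already a complete proof, under your standing assumption that $\hat{e}_{h,C}$ depends only on $h$, the fixed $C$ and a fixed seed, and not on $D$ (for example through a normalization or tie-breaking noise computed over $D$). It is the paper's argument written out: the paper simply observes that for fixed $C$ the order $\hat{\leq}_h$ is a total order and defers to Pick-to-Learn's Lemma~A.6.
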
\vspace{0.4em}

\begin{lemma}
    For any $h$ present during the execution of $\mathcal{A}_{C}$, \\ $P_C[P_z(e_h(z) \geq \omega \land \hat{e}_{h,C}(z_x) \leq \omega  \; | \; C,D) \leq \alpha + \epsilon_{\alpha} \; | \; D] \geq 1-\beta$ for the threshold $\omega$.
\end{lemma}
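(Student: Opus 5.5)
The plan is to reduce Lemma 2 to the conditional coverage guarantee of split conformal prediction in Equation (\ref{eqn:conf_prediction}), and then show that the bad event $e_h(z)\ge\omega \land \hat e_{h,C}(z_x)\le\omega$ forces a miscoverage of the interval.

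\textbf{Step 1 (conditional coverage).} Fix $h$ and condition on $D$. Treat $h$ as fixed with respect to $C$: it is produced by $L$ from samples drawn only from $D$, and the earlier remark notes that the dependence on $C$ is weak. The scores $s(z_i^C)=|e_h(z_i^C)-a_h(z_{i,x}^C)|/\mu_h(z_{i,x}^C)$ are then IID, and they are well defined because $\mu_h$ never vanishes. Since $\lambda_{h,C}$ is the $\lceil(1-\alpha)(n_C+1)\rceil/n_C$ empirical quantile, the standard result cited from \cite{angelopoulos2023conformal} gives the following. The coverage $P_z[s(z)\le\lambda_{h,C}\mid C,D]$ is Beta distributed over $C$, and with $n_C$ chosen as in the paper it lies in $[1-\alpha-\epsilon_\alpha,\,1-\alpha+\epsilon_\alpha]$ with probability at least $1-\beta$. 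This is Equation (\ref{eqn:conf_prediction}). We only need the lower bound: with probability at least $1-\beta$ over $C$,
\[
P_z\big[\,|e_h(z)-a_h(z_x)|>\lambda_{h,C}\mu_h(z_x)\mid C,D\big]\le \alpha+\epsilon_\alpha .
\]

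\textbf{Step 2 (event inclusion).} Fix a $C$ in this good event and take $z$ with $e_h(z)\ge\omega$ and $\hat e_{h,C}(z_x)\le\omega$. Then $a_h(z_x)+\lambda_{h,C}\mu_h(z_x)\le\omega\le e_h(z)$, so $e_h(z)-a_h(z_x)\ge\lambda_{h,C}\mu_h(z_x)$. This places $z$ outside the open interval, or on its upper endpoint. Any tie-breaking noise $\sigma$ only increases $\hat e$, so the inclusion survives it.

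\textbf{Step 3 (boundary case, the main obstacle).} The boundary case is the delicate point, because of the mismatch between the nonstrict inequalities and the closed conformal interval. I would handle it with the strictly positive tie-breaking term $\sigma(x)$ from the remark. With it, $\hat e_{h,C}(z_x)\le\omega$ gives $a_h+\lambda\mu<\omega\le e_h$, which is a strict miscoverage. Alternatively, one can use the convention that the quantile-based coverage statement holds for the strict-inequality score event; ties have probability zero after randomization. Once this is settled, monotonicity of probability gives $P_z(\text{bad event}\mid C,D)\le\alpha+\epsilon_\alpha$ on the good event for $C$. That event has probability at least $1-\beta$ given $D$, which proves the lemma.

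\textbf{Step 4 (uniformity over iterations).} I would state that the lemma is per-$h$. A union over iterations is not needed here, because Theorem \ref{thm:guarantee} applies it only to the final $h_f$, relying on the same independence assumption between $C$ and $h$.
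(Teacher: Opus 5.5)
This is essentially the paper's proof. It likewise invokes the conformal guarantee of Equation~(\ref{eqn:conf_prediction}) for each $h$, keeps only the lower coverage bound, and uses that the bad event forces $e_h(z)\ge\hat e_{h,C}(z_x)$; its law-of-total-probability split is exactly your monotonicity step. It also shares your heuristic treatment of $h$ as independent of $C$, which neither argument justifies, since $Q$ and hence $h$ depend on $C$ through $\lambda_{h,C}$.

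Your Step~3 is more careful than the paper, which silently treats $\{e_h(z)\ge\hat e_{h,C}(z_x)\}$ as the complement of $\{e_h(z)\le\hat e_{h,C}(z_x)\}$. However, the boundary case needs neither $\sigma$ nor the ``ties have probability zero'' convention, which in fact fails for the paper's $\mu_h=0.1a_h$, where every point with $e_h=0$ gets the same score. Because $e_h\in\{0,1\}$ and $\omega\in(0,1)$ (e.g.\ $\omega=0.3$), $e_h(z)\ge\omega$ forces $e_h(z)=1>\omega\ge\hat e_{h,C}(z_x)$, which is already a strict miscoverage.
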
\vspace{0.4em}

\begin{lemma}
$e_{h_f}(z) \geq \omega \; \land \; \kappa_{C}(\kappa_{C}(D) \cup \{z\}) = \kappa_{C}(D) \implies e_{h_f}(z) \geq \omega \; \land \; \hat{e}_{h_f,C}(z_x) \leq \omega$.
\end{lemma}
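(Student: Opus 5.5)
We prove the contrapositive of the second conjunct. Assume $e_{h_f}(z) \geq \omega$ (the first conjunct carries over unchanged) and $\kappa_{C}(\kappa_{C}(D) \cup \{z\}) = \kappa_{C}(D)$, and suppose for contradiction that $\hat{e}_{h_f,C}(z_x) > \omega$. We will show that running $\mathcal{A}_{C}$ on the augmented input $Q \cup \{z\}$, where $Q=\kappa_C(D)$, must add $z$ to the compressed set, so the output is not $Q$. Since the inequality in the statement is non-strict, the boundary case $\hat{e}_{h_f,C}(z_x)=\omega$ already satisfies the conclusion. The only case to rule out is therefore strict exceedance, and this is where the termination condition of Line 1 is used.

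The first step is to trace the execution of $\mathcal{A}_{C}$ on $D' := Q \cup \{z\}$, treating the unlabeled pool as the states of $Q$ together with $z_x$. We claim that, as long as $z$ has not been picked, the run on $D'$ reproduces the run on $D$ step by step. The argument is by induction on the iteration, using that $C$ is fixed. At the start both runs have $h=h_0$ and the same $\lambda_{h_0,C}$, hence the same function $\hat{e}_{h_0,C}$. At each iteration of the run on $D$, the selected $\bar{x}$ maximizes $\hat{e}_{h,C}$ over $D\setminus Q_{\text{current}}$, a set that contains the remaining elements of $Q$. Because $\hat{\leq}_h$ is a total order (ties are broken per the Remark), the selection on $D'$ is either the same point or $z_x$. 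So, until $z_x$ is chosen, the two runs have identical $Q_{\text{current}}$, identical $h = L(Q_{\text{current}})$, and identical $\lambda_{h,C}$. This is essentially the content of the preference property established in Lemma 1, and we would cite or reuse its argument here.

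The second step is to show that $z_x$ must eventually be picked. If it never were, the run on $D'$ would exhaust $Q$ along the same trajectory and reach the final hypothesis $h_f$ with $Q_{\text{current}}=Q$. At that point $D'\setminus Q = \{z_x\}$, and the loop test in Line 1 evaluates $\hat{e}_{h_f,C}(z_x) \geq \omega$, which holds by assumption. The loop would therefore continue and add $z$. Either way $z$ enters the compressed set, so $\kappa_C(D') \ni z$. Because $z \notin Q$, this gives $\kappa_C(D') \neq Q$, contradicting the hypothesis. The case $z \in Q$ is degenerate: the returned hypothesis was then fit on $z$, so we handle it by noting that $z$'s state lies outside $D\setminus Q$, and the statement is read for $z$ not already in $Q$, as in Campi--Garatti.

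The main obstacle is making the simulation argument of the first step rigorous. The termination test on $D$ and on $D'$ must agree at every intermediate step. The run on $D$ stops only when the maximizer over $D\setminus Q$ falls below $\omega$. On $D'$ the maximizer is taken over the smaller set $Q$-remaining $\cup\{z_x\}$, so it might fall below $\omega$ earlier, before all of $Q$ is collected. We resolve this by noting that every element actually collected on $D$ was selected while the loop condition held, so its $\hat{e}$ value was at least $\omega$ at that step. The same element is still available and maximal on $D'$ at that step, which keeps the loop running. For this we need the monotone structure in which the run on $D$ picks the elements of $Q$ in the same order on $D'$, and we would verify it carefully, including tie-breaking, through the seeded determinism described in the Remark.
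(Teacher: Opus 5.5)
Your proposal is correct and follows the paper's proof in contrapositive form. The paper argues directly: since $\kappa_C(Q\cup\{z\})=Q$, the run on $Q\cup\{z\}$ retraces the run on $D$, ends with $h_f$ and only $z_x$ remaining, and so must stop there, giving $\hat{e}_{h_f,C}(z_x)\le\omega$; your version spells out the step-by-step retracing that the paper only asserts, and also flags the degenerate case $z\in Q$, which the paper's proof does not address and which is a null event when states are drawn from a continuous distribution.
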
\vspace{0.4em}

\begin{lemma}
Given logical statements $A(C,D)$ and $B(C,D)$ for random variables $C$ and $D$, if $P_C(A(C,D) | D) \geq 1-\beta$ and $P_D(B(C,D) | C) \geq 1-\delta$, then $P_{C,D}(A(C,D) \land B(C,D)) \geq 1-\beta-\delta$.
\end{lemma}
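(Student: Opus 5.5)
The plan is to pass from the two conditional statements to unconditional bounds on the complementary events, and then combine them with a union bound on the joint probability space of $(C,D)$. I read the hypotheses as holding for (almost) every realization of the conditioning variable: $P_C(A(C,D)\mid D)\geq 1-\beta$ for almost every $D$, and $P_D(B(C,D)\mid C)\geq 1-\delta$ for almost every $C$. This is exactly how the hypothesis is produced by Lemma 3, where $D$ is fixed and the randomness is over $C$, and by the compression bound of \cite{campi2023compression}, where $C$ is fixed and the randomness is over $D$.

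\textbf{Step 1 (unconditioning).} By the tower property, write $P_{C,D}(\lnot A(C,D)) = \mathbb{E}_D\bigl[P_C(\lnot A(C,D)\mid D)\bigr]$. Since the integrand is at most $\beta$ for almost every $D$, this gives $P_{C,D}(\lnot A)\leq\beta$. In the same way, $P_{C,D}(\lnot B) = \mathbb{E}_C\bigl[P_D(\lnot B(C,D)\mid C)\bigr]\leq\delta$. Conditioning is taken in different directions in the two hypotheses, and that is harmless: both expectations compute the same joint probability measure, once disintegrated along $D$ and once along $C$.

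\textbf{Step 2 (union bound).} From Step 1, $P_{C,D}(\lnot A\lor\lnot B)\leq P_{C,D}(\lnot A)+P_{C,D}(\lnot B)\leq\beta+\delta$. Taking complements gives $P_{C,D}(A\land B)\geq 1-\beta-\delta$.

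\textbf{Main obstacle.} The only delicate point is making Step 1 rigorous. This requires that $A(C,D)$ and $B(C,D)$ be jointly measurable events, so that Fubini/disintegration applies. It also requires that each conditional bound hold on a set of conditioning values of full measure, not merely on average. In our setting $C$ and $D$ are independent IID draws, so the joint law is a product measure and Fubini's theorem applies directly: $P_C(\cdot\mid D)$ is simply the $C$-marginal probability at fixed $D$. I would state this independence assumption explicitly and invoke Fubini. Everything after that is elementary.
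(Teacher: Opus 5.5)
Your proof is correct and follows the paper's argument: bound each complementary event unconditionally, $P_{C,D}(\lnot A)=\mathbb{E}_D\bigl[P_C(\lnot A\mid D)\bigr]\le\beta$ and symmetrically $P_{C,D}(\lnot B)\le\delta$, then apply the union bound. Your added points are sensible rigor that the paper leaves implicit: joint measurability, reading the hypotheses as holding for almost every value of the conditioning variable, and independence of $C$ and $D$, which makes $P_C(\cdot\mid D)$ the $C$-marginal at fixed $D$.
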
\vspace{0.4em}

\noindent \textbf{Theorem 1. }
Let $\mathcal{A}_{C}$ be Algorithm 1, such that $\mathcal{A}_{C}(D) = (h_f, Q)$. Pick $\alpha, \epsilon_{\alpha}, \beta,\delta \in (0, 1)$ and calculate $\bar{\epsilon}(|Q|, \delta)$ as in Pick-to-Learn.
Then,
\begin{equation*}
    P_{C,D}[P_z(e_{h_f}(z) \geq \omega \; | \; C,D) \leq \bar{\epsilon}(|Q|, \delta) + \alpha + \epsilon_{\alpha}] \geq 1-\beta-\delta.
\end{equation*}

\subsection{Proof of Lemma 1}

\textbf{Lemma 1.} Let $\kappa_{C}$ be our compression function; that is, $\kappa_{C}(D) := Q$.
Then, $\kappa_{C}$ is a preferent compression function given $C$.
\\

\begin{proof}
If $C$ is given, the proof of this statement follows the proof of Lemma A.6 from Paccagnan et al. (2023) \cite{paccagnan2023picktolearn} since $\hat{e}_{h,C}$ defines a total order.
The proof shows that, given that $\kappa_{C}(D \cup \{z\}) \neq \kappa_{C}(D)$, that  $\kappa_{C}(D \cup \{z\}) \neq V$ for any $V \subseteq D$. 
\end{proof}

\subsection{Proof of Lemma 2}

\textbf{Lemma 2. } For any $h$ present during the execution of $\mathcal{A}_{C}$, \\ $P_C[P_z(e_h(z) \geq \omega \land \hat{e}_{h,C}(z_x) \leq \omega  \; | \; C,D) \leq \alpha + \epsilon_{\alpha} \; | \; D] \geq 1-\beta$ for the threshold $\omega$.

\begin{proof}
For all $h$ present during the execution of $\mathcal{A}_{C}$, our methodology guarantees, via conditional conformal prediction, that

\begin{align*}
    P_C[1-\alpha-\epsilon_{\alpha} \leq P_z(e_h(z) \in [a_{h}(z_x) - \lambda_{h,C}\mu_{h,\eta}(z_x), \\
    a_{h}(z_x) + \lambda_{h,C}\mu_{h}(z_x)] \; | \; C,D) \leq 1-\alpha+\epsilon_{\alpha} \; | \; D] \geq 1-\beta
\end{align*}

for suitable values of $\alpha, \epsilon_{\alpha}, \beta$, and $|C|$, for $\lambda_{h,C}$ computed via conformal prediction, for any $z$ selected independently from $C$ and $D$, and for any arbitrary $a_{h}(x)$ and $\mu_{h}(x)$.
We impose that $\mu_{h}(x)$ must never be zero.
We write $\hat{e}_{h,C}(x) := a_{h}(x) + \lambda_{h,C}\mu_{h}(x)$.

Rewriting the guarantee from conformal prediction, we have

\begin{center}
    $P_C[1-\alpha-\epsilon_{\alpha} \leq P_z(a_{h}(z_x) - \lambda_{h,C}\mu_{h}(z_x) \leq e_h(z) \leq a_{h}(z_x) + \lambda_{h,C}\mu_{h}(z_x) \; | \; C,D) \leq 1-\alpha+\epsilon_{\alpha} \; | \; D] \geq 1-\beta$
\end{center}

Since $1-\alpha-\epsilon_{\alpha} \leq P_z(a_{h}(z_x) - \lambda_{h,C}\mu_{h}(z_x) \leq e_h(z) \leq a_{h}(z_x) + \lambda_{h,C}\mu_{h}(z_x) \; | \; C,D) \leq 1-\alpha+\epsilon_{\alpha}$ implies $1-\alpha-\epsilon_{\alpha} \leq P_z(a_{h}(z_x) - \lambda_{h,C}\mu_{h}(z_x) \leq e_h(z) \leq a_{h}(z_x) + \lambda_{h,C}\mu_{h}(z_x) \; | \; C,D)$, we have

\begin{center}
    $P_C[1-\alpha-\epsilon_{\alpha} \leq P_z(a_{h}(z_x) - \lambda_{h,C}\mu_{h}(z_x) \leq e_h(z) \leq a_{h}(z_x) + \lambda_{h,C}\mu_{h}(z_x) \; | \; C,D) \; | \; D] \geq 1-\beta$
\end{center}

Since $a_{h}(z_x) - \lambda_{h,C}\mu_{h}(z_x) \leq e_h(z) \leq a_{h}(z_x) + \lambda_{h,C}\mu_{h}(z_x)$ implies $e_h(z) \leq a_{h}(z_x) + \lambda_{h,C}\mu_{h}(z_x)$, 

\begin{center}
    $P_z(e_h(z) \leq a_{h}(z_x) + \lambda_{h,C}\mu_{h}(z_x) \; | \; C,D) \geq$ \\
    $P_z(a_{h}(z_x) - \lambda_{h,C}\mu_{h}(z_x) \leq e_h(z) \leq a_{h}(z_x) + \lambda_{h,C}\mu_{h}(z_x) \; | \; C,D)$
\end{center}

is always true.
So,

\begin{center}
    $P_C[1-\alpha-\epsilon_{\alpha} \leq P_z(e_h(z) \leq a_{h}(z_x) + \lambda_{h,C}\mu_{h}(z_x) \; | \; C,D)\; | \; D ] \geq 1-\beta$ 
\end{center}

and substituting $\hat{e}_{h,C}$,

\begin{center}
    $P_C[1-\alpha-\epsilon_{\alpha} \leq P_z(e_h(z) \leq \hat{e}_{h,C}(z_x) \; | \; C,D) \; | \; D] \geq 1-\beta$
\end{center}

This means that

\begin{center}
    $P_C[P_z(e_h(z) \geq \hat{e}_{h,C}(z_x) \; | \; C,D) \leq \alpha+\epsilon_{\alpha} \; | \; D] \geq 1-\beta$
\end{center}

Now we use the Law of Total Probability for the inner probability $P_z(e_h(z) \geq \hat{e}_{h,C}(z_x) \; | \; C,D)$:

\begin{center}
    $P_z(e_h(z) \geq \hat{e}_{h,C}(z_x) \; | \; C,D) = $ \\
    $P_z(e_h(z) \geq \omega \land e_h(z) \geq \hat{e}_{h,C}(z_x) \geq \omega  \; | \; C,D) + P_z(e_h(z) \geq \omega \land \hat{e}_{h,C}(z_x) \leq \omega  \; | \; C,D) +$ \\
    $P_z(e_h(z) \leq \omega \land \hat{e}_{h,C}(z_x) \leq e_h(z) \leq \omega  \; | \; C,D)$
\end{center}

since $\omega$ is a fixed deterministic value. 

So, $P_z(e_h(z) \geq \hat{e}_{h,C}(z_x) \; | \; C,D) \leq \alpha+\epsilon_{\alpha}$ implies that $P_z(e_h(z) \geq \omega \land \hat{e}_{h,C}(z_x) \leq \omega  \; | \; C,D) \leq \alpha + \epsilon_{\alpha}$.
This gives

\begin{center}
    $P_C[P_z(e_h(z) \geq \omega \land \hat{e}_{h,C}(z_x) \leq \omega  \; | \; C,D) \leq \alpha + \epsilon_{\alpha}] \geq P_C[P_z(e_h(z) \geq \hat{e}_{h,C}(z_x) \; | \; C,D) \leq \alpha+\epsilon_{\alpha} \; | \; D] \geq 1-\beta$
\end{center}

and 

\begin{center}
    $P_C[P_z(e_h(z) \geq \omega \land \hat{e}_{h,C}(z_x) \leq \omega  \; | \; C,D) \leq \alpha + \epsilon_{\alpha} \; | \; D] \geq 1-\beta$.
\end{center}
\end{proof}

\subsection{Proof of Lemma 3}

\textbf{Lemma 3. } $e_{h_f}(z) \geq \omega \; \land \; \kappa_{C}(\kappa_{C}(D) \cup \{z\}) = \kappa_{C}(D) \implies e_{h_f}(z) \geq \omega \; \land \; \hat{e}_{h_f,C}(z_x) \leq \omega$.

\begin{proof}
We will show that $\kappa_{C}(\kappa_{C}(D) \cup \{z\}) = \kappa_{C}(D) \implies \hat{e}_{h_f,C}(z_x) \leq \omega$.
Then, the statement follows. \\

If $z$ is such that $\kappa_{C}(\kappa_{C}(D) \cup \{z\}) = \kappa_{C}(D)$, or $\kappa_{C}(Q \cup \{z\}) = Q$, then $\mathcal{A}_{C}(Q \cup \{z\})$ must select all of the same points as $\mathcal{A}_{C}(D)$, choosing the same points at every iteration.
It must terminate having produced $Q$ and $h_f$.
This must mean that in its last iteration, $\arg \max_{\tilde{z} \in Q \cup \{z\} \setminus Q} \hat{e}_{h_f,C}(\tilde{z}_x) \leq \omega$, which implies that $\arg \max_{\tilde{z} \in \{z\}} \hat{e}_{h_f,C}(\tilde{z}_x) \leq \omega$.
Then, $\hat{e}_{h_f,C}(z_x) \leq \omega$.
\end{proof}

\subsection{Proof of Lemma 4}

\textbf{Lemma 4. } Given logical statements $A(C,D)$ and $B(C,D)$ for random variables $C$ and $D$, if $P_C(A(C,D) | D) \geq 1-\beta$ and $P_D(B(C,D) | C) \geq 1-\delta$, then $P_{C,D}(A(C,D) \land B(C,D)) \geq 1-\beta-\delta$.

\begin{proof}
We have

\begin{center}
    $P_C(A(C,D)|D) \geq 1-\beta \implies$ \\
    $P_C(\neg A(C,D)|D) \leq \beta$
\end{center}

Since

\begin{center}
    $P_{C,D}(\neg A(C,D)) = \mathbb{E}_D[P_C(\neg A(C,D)|D)]$
\end{center}

and $P_C(\neg A(C,D)|D) \leq \beta$, it follows that $\mathbb{E}_D[P_C(\neg A(C,D)|D)] \leq \mathbb{E}[\beta] = \beta$. 
So, $P_{C,D}(\neg A(C,D)) \leq \beta$.
The reasoning is similar for $P_{C,D}(\neg B(C,D)) \leq \delta$. \\

Now using the union bound,

\begin{center}
    $P_{C,D}(\neg A(C,D) \lor \neg B(C,D)) \leq \beta + \delta$
\end{center}

and 

\begin{center}
    $P_{C,D}(A(C,D) \land B(C,D)) \geq 1-\beta-\delta$.
\end{center}
\end{proof}

\subsection{Proof of Theorem 1}

\textbf{Theorem 1.} Let $\mathcal{A}_{C}$ be Algorithm 1, such that $\mathcal{A}_{C}(D) = (h_f, Q)$. Pick $\alpha, \epsilon_{\alpha}, \beta,\delta \in (0, 1)$ and calculate $\bar{\epsilon}(|Q|, \delta)$ as in Pick-to-Learn~\cite{paccagnan2023picktolearn}.
Then,
\begin{equation*}
    P_{C,D}[P_z(e_{h_f}(z) \geq \omega \; | \; C,D) \leq \bar{\epsilon}(|Q|, \delta) + \alpha + \epsilon_{\alpha}] \geq 1-\beta-\delta.
\end{equation*}

\begin{proof}
\textbf{Lemma 2} implies that $P_C[P_z(e_{h_f}(z) \geq \omega \land \hat{e}_{h_f,C}(z_x) \leq \omega  \; | \; C,D) \leq \alpha + \epsilon_{\alpha} \; | \; D] \geq 1-\beta$. \textbf{Lemma 3} implies that

\begin{center}
    $P_z(e_{h_f}(z) \geq \omega \land \kappa_{C}(\kappa_{C}(D) \cup \{z\}) = \kappa_{C}(D))  \; | \; C,D) \leq P_z(e_{h_f}(z) \geq \omega \land \hat{e}_{h_f,C}(z_x) \leq \omega  \; | \; C,D)$
\end{center}

is always true. So,

\begin{center}
    $P_C[P_z(e_{h_f}(z) \geq \omega \land \kappa_{C}(\kappa_{C}(D) \cup \{z\}) = \kappa_{C}(D))  \; | \; C,D) \leq \alpha + \epsilon_{\alpha} \; | \; D] \geq 1-\beta$
\end{center}

Now, we use the Law of Total Probability on the inner probability to write:

\begin{center}
    $P_z(e_{h_f}(z) \geq \omega \; | \; C,D) = P_z(e_{h_f}(z) \geq \omega \land \kappa_{C}(\kappa_{C}(D) \cup \{z\}) = \kappa_{C}(D) \; | \; C,D) + $\\
    $P_z(e_{h_f}(z) \geq \omega \land \kappa_{C}(\kappa_{C}(D) \cup \{z\}) \neq \kappa_{C}(D) \; | \; C,D)$
\end{center}

We know that 
\begin{center}
    $P_z(e_{h_f}(z) \geq \omega \land \kappa_{C}(\kappa_{C}(D) \cup \{z\}) \neq \kappa_{C}(D) \; | \; C,D) \leq $ \\
    $P_z(\kappa_{C}(\kappa_{C}(D) \cup \{z\}) \neq \kappa_{C}(D) \; | \; C,D)$
\end{center}
by the monotonicity of probabilities.
So, $P_z(e_{h_f}(z) \geq \omega \land \kappa_{C}(\kappa_{C}(D) \cup \{z\}) = \kappa_{C}(D))  \; | \; C,D) \leq \alpha + \epsilon_{\alpha}$ implies $P_z(e_{h_f}(z) \geq \omega \; | \; C,D) \leq P_z(\kappa_{C}(\kappa_{C}(D) \cup \{z\}) \neq \kappa_{C}(D) \; | \; C,D) + \alpha + \epsilon_{\alpha}$ always.
So we can write

\begin{center}
    $P_C[P_z(e_{h_f}(z) \geq \omega \; | \; C,D) \leq P_z(\kappa_{C}(\kappa_{C}(D) \cup \{z\}) \neq \kappa_{C}(D) \; | \; C,D) + \alpha + \epsilon_{\alpha} \; | \; D] \geq$ \\
    $P_C[P_z(e_{h_f}(z) \geq \omega \land \kappa_{C}(\kappa_{C}(D) \cup \{z\}) = \kappa_{C}(D))  \; | \; C,D) \leq \alpha + \epsilon_{\alpha} \; | \; D] \geq 1-\beta$
\end{center}

or simply

\begin{center}
    $P_C[P_z(e_{h_f}(z) \geq \omega \; | \; C,D) \leq P_z(\kappa_{C}(\kappa_{C}(D) \cup \{z\}) \neq \kappa_{C}(D) \; | \; C,D) + \alpha + \epsilon_{\alpha} \; | \; D] \geq 1-\beta$.
\end{center}

Theorem 7 of~\cite{campi2023compression} states that, if $\kappa_{C}$ satisfies the property of preference, then

\begin{center}
$P_D[P_z(\kappa_{C}(\kappa_{C}(D) \cup \{z\}) \neq \kappa_{C}(D) | D) \leq \bar{\epsilon}(|Q|, \delta)] \geq 1-\delta$
\end{center}

We claim that this is equivalent to 

\begin{center}
$P_D[P_z(\kappa_{C}(\kappa_{C}(D) \cup \{z\}) \neq \kappa_{C}(D) | D, C) \leq \bar{\epsilon}(|Q|, \delta) \; | \; C] \geq 1-\delta$
\end{center}

and will explain this reasoning at the end of this proof.
We invoke \textbf{Lemma 1} (our compression function is preferent) to use this theorem. \\

For clarity, let us shorthand the statement $P_z(e_{h_f}(z) \geq \omega \; | \; C,D) \leq P_z(\kappa_{C}(\kappa_{C}(D) \cup \{z\}) \neq \kappa_{C}(D) \; | \; C,D) + \alpha + \epsilon_{\alpha}$ as $A(C, D)$ and the statement $P_z(\kappa_{C}(\kappa_{C}(D) \cup \{z\}) \neq \kappa_{C}(D) | D, C) \leq \bar{\epsilon}(|Q|, \delta)$ as  $B(C, D)$.
So, our logic so far has produced the statements

\begin{center}
    $P_C(A(C,D)|D) \geq 1-\beta$ \\
    $P_D(B(C,D)|C) \geq 1-\delta$
\end{center}

$A(C, D) \land B(C, D)$ is equal to

\begin{center}
    $P_z(e_{h_f}(z) \geq \omega \; | \; C,D) \leq \bar{\epsilon}(|Q|, \delta) + \alpha + \epsilon_{\alpha}$
\end{center}

$A(C,D)$ is true with a probability over $C$ (though for all $D$), and $B(C,D)$ is true with a probability over $D$ (though for all $C$).
Thus, we can use \textbf{Lemma 4} to merge the two statements and give our final statement:

\begin{center}
    $P_{C,D}[P_z(e_{h_f}(z) \geq \omega \; | \; C,D) \leq \epsilon(|Q|, \delta) + \alpha + \bar{\epsilon}_{\alpha}] \geq 1-\beta-\delta$.
\end{center}

Now it remains to explain why we can write Theorem 7 in~\cite{campi2023compression} as 

\begin{center}
$P_D[P_z(\kappa_{C}(\kappa_{C}(D) \cup \{z\}) \neq \kappa_{C}(D) | D, C) \leq \bar{\epsilon}(|Q|, \delta) \; | \; C] \geq 1-\delta$.
\end{center}

We argue that the original theorem did not consider a compression function that is a function of any random variable other than $D$, while in our case $\kappa$ is also dependent on $C$.
This means that in our case, the probability of change of compression is a function of $D$ and $C$ and not just $D$.
Therefore, the probability of change of compression should be written as $P_z(\kappa_{C}(\kappa_{C}(D) \cup \{z\}) \neq \kappa_{C}(D) | D, C)$ for correctness.
The reasoning in the proof of Theorem 4 in~\cite{campi2023compression} about the distribution of the probability of change of compression with respect to $D$ does not change.
One can think of our proof as adding reasoning on the distribution of the probability of change of compression with respect to $C$ \textit{in addition to} $D$.

\end{proof}

\end{document}